\numberwithin{equation}{section}
\theoremstyle{plain}
\newtheorem{theorem}{Theorem}[section]
\newtheorem{lemma}[theorem]{Lemma}
\theoremstyle{definition}
\newtheorem{problem}{Problem}
\newcommand{\N}{\mathbb{N}}
\newcommand{\Z}{\mathbb{Z}}
\newcommand{\A}{\mathbf{A}}
\newcommand{\B}{\mathbf{B}}
\newcommand{\C}{\mathbf{C}}
\newcommand{\D}{\mathbf{D}}
\newcommand{\X}{\mathbf{X}}
\newcommand{\CSP}{\mathrm{CSP}}
\newcommand{\PCSP}{\mathrm{PCSP}}
\newcommand{\Pol}{\mathrm{Pol}}
\newlength{\ldprobleft} \setlength{\ldprobleft}{0.045\textwidth} 
\newlength{\ldprobmid}  \setlength{\ldprobmid}{0.12\textwidth} 
\newlength{\ldprobright}\setlength{\ldprobright}{0.78\textwidth} 
\newcommand{\dproblem}[3]{
\begin{equation*}
\parbox{\textwidth}{
\begin{tabular}{ @{} p{\ldprobleft} p{\ldprobmid} p{\ldprobright} @{} }
& \multicolumn{2}{l}{\textbf{#1}} \\
& {\begin{minipage}[t]{\ldprobmid}Input:\vspace{1.5pt}\end{minipage}} & {\begin{minipage}[t]{\ldprobright}#2\vspace{1.5pt}\end{minipage}} \\ 
& {\begin{minipage}[t]{\ldprobmid}Output:\end{minipage}} & {\begin{minipage}[t]{\ldprobright}#3\end{minipage}} \\
\end{tabular}}
\end{equation*}
}
\title{Sandwiches for Promise Constraint Satisfaction}
\author[G.~Deng]{Guofeng Deng}
\author[E.~El Sai]{Ezzeddine El Sai}
\author[T.~Manders]{Trevor Manders}
\author[P.~Mayr]{Peter Mayr}
\author[P.~Nakkirt]{Poramate Nakkirt}
\author[A.~Sparks]{Athena Sparks}
\address{Department of Mathematics, University of Colorado,  Boulder, USA}
\email{guofeng.deng@colorado.edu, ezzeddine.elsai@colorado.edu,  trevor.manders@colorado.edu, peter.mayr@colorado.edu, poramate.nakkirt@colorado.edu, athena.sparks@colorado.edu}
\thanks{The authors were supported by the National Science Foundation under Grant No. DMS 1500254 and by the REU program of the Department of Mathematics at CU Boulder.}
\keywords{constraint satisfaction, promise problem, polymorphism, affine structure}
\subjclass[2010]{68Q17}
\date{\today}
\begin{document}
\maketitle

\begin{abstract}
 Promise Constraint Satisfaction Problems ($\PCSP$) were proposed recently by Brakensiek and Guruswami~\cite{BG:PCS}
 as a framework to study approximations for Constraint Satisfaction Problems ($\CSP$).
 Informally a $\PCSP$ asks to distinguish between whether a given instance of a $\CSP$ has a solution or
 not even a specified relaxation can be satisfied.
 All currently known tractable $\PCSP$s can be reduced in a natural way to tractable $\CSP$s. 
 Barto~\cite{Ba:PMF} presented an example of a $\PCSP$ over Boolean structures for which this reduction requires 
 solving a $\CSP$ over an infinite structure.
 We give a first example of a $\PCSP$ over Boolean structures which reduces to a tractable $\CSP$ over
 a structure of size $3$ but not smaller.
 Further we investigate properties of $\PCSP$s that reduce to systems of linear equations or to $\CSP$s 
 over structures with semilattice or majority polymorphism. 
\end{abstract}

\section{Introduction}

 The Constraint Satisfaction Problem ($\CSP$) for a fixed relational structure $\A$ can be formulated as the following
 decision problem:
\dproblem{$\CSP(\A)$}{a relational structure $\X$ of the same type as $\A$}{yes, if there exists a homomorphism $\X\to\A$, \\ no, otherwise}
 For example, the question of whether a given graph is $r$-colorable is a $\CSP$ where $\A$ is the complete graph $K_r$ on $r$ vertices.
 
 In~\cite{BG:PCS} Brakensiek and Guruswami introduced Promise Satisfaction Problems ($\PCSP$) as relaxations
 and generalizations of $\CSP$. For relational structures $\A,\B$ of the same type, let
 \dproblem{$\PCSP(\A,\B)$}{a relational structure $\X$ of the same type as $\A$}{yes, if there exists a homomorphism $\X\to\A$, \\ no, if there exists no homomorphism $\X\to\B$}
 Here the \emph{promise} is that for the input $\X$ exactly one of the two alternatives $\exists\, \X\to\A$ or
 $\not\exists\, \X\to\B$ holds.
 A typical example of a $\PCSP$ is to distinguish graphs that are $r$-colorable from those that are not even
 $s$-colorable for $r\leq s$.
 
 Let $\A,\B,\C$ be relational structures of the same type with homomorphisms $\A\to\C\to\B$. Then we say $\C$ is 
 \emph{sandwiched} by $\A$ and $\B$.\footnote{Imagine $\C$ as the cheese between avocado $\A$ and bread $\B$ in
 the sandwich.}
 In this case $\PCSP(\A,\B)$ has a straightforward (constant time) reduction to $\CSP(\C)$:
 a structure $\X$ is a yes-instance for $\PCSP(\A,\B)$ iff $\X$ is a yes-instance for $\CSP(\C)$.
 In general, the complexity of $\PCSP$ is unknown. However all currently known tractable $\PCSP(\A,\B)$ can be
 reduced to tractable $\CSP(\C)$ for some $\C$ sandwiched by $\A$ and $\B$. 
 
 In a research project for undergraduate students (REU) organised by P.~Mayr and A.~Sparks at CU Boulder in Summer
 2019, we considered the following {\bf meta question} on $\PCSP$:
\begin{quote}
 Given finite $\A,\B$, does there exists some $\C$ sandwiched by $\A$ and $\B$ such that $\CSP(\C)$ is tractable?
\end{quote}
 If the answer is yes, then clearly $\PCSP(\A,\B)$ is tractable. However a negative answer may not necessarily
 yield hardness of  $\PCSP(\A,\B)$.
 
 In any case it is not known whether the meta question is decidable. The main obstacle is that tractable sandwiched
 structures may grow in size.
 Barto gave an example of Boolean $\A,\B$ for which all tractable sandwiched $\C$ are infinite~\cite{Ba:PMF}.
 Moreover, it is open whether the size of the smallest finite sandwiched tractable $\C$,
 that is, the function
\[ c(\A,\B) := \min\{ |\C|\ :\ \A\to\C\to\B, \C \text{ finite}, \CSP(\C) \text{ tractable} \} \]
 is computable (If no such $\C$ exists, let $c(\A,\B)$ be undefined). 
 One outcome of the REU is a first example of Boolean $\A,\B$ for which the smallest sandwiched $\C$ with
 tractable $\CSP(\C)$ has size $3$ (see Theorem~\ref{thm:example}). 
 In particular $c(\A,\B)$ is not bounded above by $\max(|A|,|B|)$. 

 In Section~\ref{sec:majority} we show that if $\A,\B$ sandwich some $\C$ with a conservative polymorphism
 (or a majority polymorphism in case $|A| =2$), then they sandwich some $\D$ with the same polymorphism of size
 $|D|\leq |A|$.

\section{Affine sandwiches} 

 A relational structure $\C$ is \emph{affine} if its domain $C$ forms an abelian group $\mathbb{C} := (C,+,-,0)$
 and $x-y+z$ is a polymorphism of $\C$, that is, in
 \[ \Pol(\C) := \{ f\colon \C^k\to \C \ :\ k\in\N \}. \]
 In other words, $\C$ is affine if every $n$-ary relation $R^\C$ of $\C$ is a coset of a subgroup of
 $\mathbb{C}^n$. Then $\CSP(\C)$ encodes a system of linear equations and can be solved in polynomial time.

 We present an example of Boolean $\A,\B$ with sandwiched affine $\C$ of size $3$ but without any sandwiched tractable
 Boolean structure.

\begin{theorem}  \label{thm:example}
 Let $\A = (\{0,1\}, R^\A),\B = (\{0,1\}, R^\B),\C = (\{0,1,2\}, R^\C)$  with
\begin{align*}
 R^\A & := \{ 100011, 010101, 001110 \}, \\
 R^\B & := \{0,1\}^6 \setminus \{ 000000,000111,111000,111111 \}, \\
 R^\C & := \text{the closure of }  R^\A \text{ under } x-y+z \bmod 3. 
\end{align*}
 Then
\begin{enumerate}
\item \label{it:sandwich}
 the affine $\C$ is sandwiched by $\A$ and $\B$ via homomorphisms
 $\A \xrightarrow{\mathrm{id}} \C \xrightarrow{g} \B$ where $g\colon\{0,1,2\} \to \{0,1\}$ is defined by
 $g(0)=g(2)=0$ and $g(1)=1$, 
\item \label{it:noBoolean} 
 but there exists no Boolean $\D$ such that $\A\to\D\to\B$ and $\CSP(\D)$ is tractable
 (assuming $\mathrm{P}\neq\mathrm{NP}$).
\end{enumerate}
\end{theorem}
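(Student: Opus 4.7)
For part (\ref{it:sandwich}), the inclusion $R^\A\subseteq R^\C$ is built into the definition of $R^\C$, so the identity is automatically a homomorphism $\A\to\C$. For the map $g$, my plan is to first compute $R^\C$ explicitly. Since $R^\C$ is generated from the three tuples $v_1=100011$, $v_2=010101$, $v_3=001110$ of $R^\A$ by the affine ternary operation $x-y+z\bmod 3$, it coincides with the affine span of $\{v_1,v_2,v_3\}$ in $(\Z/3\Z)^6$, namely the set of all $a_1v_1+a_2v_2+a_3v_3$ with $a_1+a_2+a_3\equiv 1\pmod 3$. This leaves at most nine tuples; I would list them, apply $g$ coordinate-wise to each, and verify by inspection that no image is one of the four forbidden Boolean tuples of $R^\B$.

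For part (\ref{it:noBoolean}), my plan is to invoke Schaefer's dichotomy theorem. A sandwiched Boolean $\D$ is a single $6$-ary relation $R^\D$ with $R^\A\subseteq R^\D\subseteq R^\B$, and Schaefer's theorem says that $\CSP(\D)$ is tractable only if $R^\D$ admits one of six polymorphisms: the constant $0$, the constant $1$, binary $\wedge$, binary $\vee$, ternary majority, or ternary minority $x\oplus y\oplus z$. I would rule out each case by applying the operation in question to $v_1,v_2,v_3$ (which all lie in $R^\D$) and showing that the output lies in the forbidden set $\{000000,000111,111000,111111\}$, contradicting $R^\D\subseteq R^\B$. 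The two constants fail at once since $000000,111111\notin R^\B$; the coordinate-wise $\min$ and $\max$ of $v_1,v_2,v_3$ are $000000$ and $111111$ respectively; and a direct calculation shows that their coordinate-wise majority is $000111$ while their coordinate-wise sum modulo $2$ is $111000$.

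The real content of the theorem therefore lies in the design of $R^\A$ and $R^\B$: the three tuples of $R^\A$ have been engineered so that every Schaefer operation applied to them produces one of the four excluded tuples of $R^\B$, simultaneously forcing every sandwiched Boolean $\D$ out of all six tractable Schaefer classes, while leaving enough room above $R^\A$ to permit an affine $3$-element sandwich via the map $g$. Once the definitions are in hand, both parts reduce to finite verifications and I do not anticipate any obstacle beyond the calculations sketched above.
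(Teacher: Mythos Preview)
Your plan for both parts is essentially the paper's own argument, and your computations (the nine-element affine span, and the four Schaefer outputs $000000$, $111111$, $000111$, $111000$) are all correct.

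There is, however, one genuine gap in part~(2). You write that a sandwiched Boolean $\D$ is a relation $R^\D$ with $R^\A\subseteq R^\D\subseteq R^\B$, but ``sandwiched'' only means there exist homomorphisms $\A\xrightarrow{f}\D\xrightarrow{h}\B$; on a two-element domain $f$ and $h$ could a priori be constant or negation rather than the identity. The paper deals with this explicitly: since $R^\B$ contains no constant tuple, neither $f$ nor $h$ can be constant, so both are bijections, and it then runs through the four identity/negation combinations for $(f,h)$. You can shortcut that case analysis: replacing $\D$ by its isomorphic copy under $f^{-1}$ reduces to $f=\mathrm{id}$ and hence $R^\A\subseteq R^\D$; and because the forbidden set $\{000000,000111,111000,111111\}$ is closed under bitwise negation, the condition $h(R^\D)\subseteq R^\B$ is equivalent to $R^\D\subseteq R^\B$ whether $h$ is the identity or negation. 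Once you add this remark, your argument is complete and coincides with the paper's.
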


\begin{proof}
 For~\eqref{it:sandwich} note first that $\A$ is a substructure of $\C$ by definition, that is, the identity map
 $\A\to\C$ is a homomorphism.
 More explicitly $R^\A$ spans the affine subspace
\begin{align*}  
  R^\C & =  \{ (x_1,\dots,x_6)\in\Z_3^6 \ :\ \begin{aligned}[t] & x_1+x_2+x_3 = 1, \\
  & x_1+x_4 = 1, x_2+x_5= 1, x_3+x_6 = 1 \}
\end{aligned} \\    
  & = \{ \begin{aligned}[t] & 100011, 010101, 001110, \\
          & 220221, 202212, 022122, \\
          & 211200,121020, 112002 \}.
\end{aligned}          
\end{align*}
 Applying $g$ we get
\begin{align*}
 g(R^{\mathbf{C}}) & = \{ \begin{aligned}[t] & 100011, 010101, 001110, \\
          & 000001, 000010, 000100, \\
          & 011000,101000, 110000 \} \subseteq R^\B.
\end{aligned}          
\end{align*}
 Hence $g\colon \C\to\B$ is a homomorphism, and~\eqref{it:sandwich} is proved.

 For~\eqref{it:noBoolean} suppose there exists Boolean $\D$ and homomorphisms $\A\xrightarrow{f}\D\xrightarrow{h}\B$
 such that $\CSP(\D)$ is tractable. Since $R^\B$ contains no constant tuple, both $f$ and $h$ are bijections.

 {\bf Case, $f$ and $h$ both are the identity:} Then $\A\leq\D\leq\B$ is a chain of substructures.
 By Schaefer's Dichotomy Theorem for Boolean $\CSP$~\cite{Sc:CSP}, $\D$ has one of the following polymorphisms:
 $0,1,\wedge,\vee$, minority or majority.
 However, closing $R^\A$ under any of the first 4 clearly yields a constant tuple, e.g.,
\[ 100011 \wedge 010101 \wedge 001110 = 000000, \]
 which is not in $R^\B$. Hence $R^\D$ is not preserved by
 $0,1,\wedge,\vee$. Next applying the minority operation $d$ to the elements of $R^\A \subseteq R^\D$ yields, e.g.,
\[ d(100011, 010101, 001110 ) = 111000 \]
 which is not in $R^\B$, hence not in $R^\D$.
 Similarly applying the majority operation $m$ yields
\[ m(100011, 010101, 001110 ) = 000111\not\in R^\D. \]
 Hence no substructure of $\B$ that contains $\A$ has tractable $\CSP$.

 {\bf Case, $f$ is the identity and $h$ is negation:} As in the previous case, closing $R^\A$ under one of the
 six polymorphisms of Schaefer's Theorem and then applying $h$ to the results yields a constant tuple, $111000$,
 or $000111$. Since neither is in $R^\B$, we have a contradiction.

 The remaining cases that $f$ is negation, $h$ is the identity or that both $f$ and $h$ are negation follow similarly.
 Thus~\eqref{it:noBoolean} is proved.
\end{proof}

 There is no known characterization of structures $\A,\B$ that sandwich an affine $\C$. But we have some 
 necessary conditions in terms of the polymorphisms from $\A$ to $\B$,
\[ \Pol(\A,\B) := \{ \A^k\to\B \ :\ k\in\N \}. \] 
 First recall that for $k\in\N$ a function $f\colon A^k\to B$ is \emph{symmetric} if $f$ is invariant under
 permutation of its arguments $x_1,\dots,x_k\in A$.
 More generally Brakensiek and Guruswami~\cite{BG:SPED} call $f\colon A^k\to B$ \emph{block-symmetric}
 for a partition of $\{1,\dots,k\}$ into blocks $B_1,\dots,B_\ell$ if $f$ is invariant under permutation of
 arguments $x_{i_1},\dots,x_{i_m}$ for any block $B_j = \{i_1,\dots,i_m\}$. Note that every function $f$ is
 block-symmetric for the partition into singletons. Further there exists a unique coarsest partition
 for which $f$ is block-symmetric, that is, a partition with maximal blocks $B_1,\dots,B_\ell$. 
 The \emph{width} of $f$ is the size of the smallest block of this coarsest partition for which $f$ is block
 symmetric, that is,
\[ \max\{ \min\{ |B_1|,\dots,|B_\ell| \} \ :\ f \text{ is block-symmetric for the partition } B_1,\dots,B_\ell \}. \]
 We can now formulate some weak necessary conditions for structures to have an affine sandwich.
 Clearly they are not sufficient.
 
\begin{theorem} \label{thm:affine}
 Let $\A,\B,\C$ be relational structures of the same type with homomorphisms $\A\to\C\to\B$ and $\C$ affine.
\begin{enumerate}
\item \label{it:bs}
 Then $\Pol(\A,\B)$ contains block-symmetric polymorphisms of arbitrary large width.
\item \label{it:fs}
 If $\C$ is finite, then $\Pol(\A,\B)$ contains symmetric polymorphisms of arbitrary large arity.
\end{enumerate}
\end{theorem}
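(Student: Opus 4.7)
Fix homomorphisms $f\colon \A\to\C$ and $g\colon\C\to\B$, and write the abelian group structure on $C$ as $\mathbb{C}=(C,+,-,0)$. The plan is to produce, for each desired width or arity, an explicit affine term operation on $\mathbb{C}$ and then sandwich it between $f$ and $g$ to obtain a polymorphism in $\Pol(\A,\B)$.

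The key observation is that every operation of the form
\[ p_{a_1,\dots,a_m}(x_1,\dots,x_m) \;=\; a_1 x_1 + \cdots + a_m x_m, \qquad a_i\in\Z,\ \sum_i a_i = 1, \]
is a polymorphism of $\C$. Indeed, such operations are exactly the terms in the clone generated by $x-y+z$ on $\mathbb{C}$, and each $n$-ary relation $R^\C$ is a coset of a subgroup of $\mathbb{C}^n$ and therefore closed under affine combinations with integer coefficients summing to $1$. Given any such $p$, the composition
\[ h(x_1,\dots,x_m) \;:=\; g\bigl(p(f(x_1),\dots,f(x_m))\bigr) \]
is a polymorphism in $\Pol(\A,\B)$. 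Post-composing with $g$ can only increase the set of argument permutations under which the operation is invariant, so the coarsest partition for $h$ is at least as coarse as that for $p$, hence the width of $h$ is at least the width of $p$.

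For~\eqref{it:bs}, given a target width $k$, pick any coprime pair $p,q\geq k$; the simplest choice is $p=k$, $q=k+1$. By B\'ezout there exist $a,b\in\Z$ with $pa+qb=1$. The operation
\[ P(x_1,\dots,x_{p+q}) \;=\; a\sum_{i=1}^{p} x_i \;+\; b\sum_{j=1}^{q} x_{p+j} \]
is a polymorphism of $\C$ by the key observation, and is block-symmetric for the partition into the two blocks of sizes $p$ and $q$. Its width is therefore at least $\min(p,q)=k$, and the composed $h=g\circ P\circ(f,\dots,f)$ inherits this.

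For~\eqref{it:fs}, assume $C$ is finite and let $n$ denote the exponent of $\mathbb{C}$. For any integer of the form $k=mn+1$ we have $k\equiv 1\pmod{n}$, so the fully symmetric operation $P(x_1,\dots,x_k)=x_1+\cdots+x_k$ on $\mathbb{C}$ has coefficients summing to $k\equiv 1$, which is still $1$ modulo the exponent and hence preserves every coset in $\mathbb{C}^n$. Thus $P$ is a symmetric polymorphism of $\C$ of arity $k$, and $g\circ P\circ(f,\dots,f)$ is a symmetric polymorphism in $\Pol(\A,\B)$ of the same arity. Letting $m\to\infty$ produces such polymorphisms of arbitrarily large arity.

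The only delicate point is the observation that operations with integer coefficients summing to $1$ are polymorphisms of any affine structure (and, in the finite-exponent case, that it suffices for the sum to be $1$ modulo $n$); everything else is straightforward number-theoretic bookkeeping and the routine fact that composing on the outside with $g$ cannot decrease symmetry.
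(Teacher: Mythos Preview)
Your proof is correct and follows essentially the same approach as the paper: build affine combinations over $\mathbb{C}$ with coefficients summing to $1$ and compose with $f$ and $g$. The only differences are cosmetic: for~\eqref{it:bs} the paper takes the simpler explicit choice of alternating coefficients $\pm 1$ on $2k+1$ arguments (blocks of odd and even indices) rather than invoking B\'ezout, and for~\eqref{it:fs} it uses $|C|$ in place of the exponent of $\mathbb{C}$.
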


\begin{proof}
 Let $f\colon\A\to\C$ and $g\colon\C\to\B$. Since $\C$ is affine, we have for each $k\in\N$ and
 $a_1,\dots,a_k\in\Z$ such that $\sum_{i=1}^k a_i x = x$ for all $x\in C$, that 
\[ C^k\to C, (x_1,\dots,x_{k}) \mapsto \sum_{i=1}^{k} a_i x_i,  \]
 is in $\Pol(\C)$. Composing this polymorphism with the homomorphisms $f$ and $g$, we obtain that
\[ A^k\to B, (x_1,\dots,x_{k}) \mapsto g\left( \sum_{i=1}^{k} a_i f(x_i) \right), \]
 is in $\Pol(\A,\B)$.

 For~\eqref{it:bs}, it follows that for $k\in\N$
\[ A^{2k+1}\to B,\ (x_1,\dots,x_{2k+1}) \mapsto g\left( \sum_{i=1}^{2k+1} (-1)^{i-1} f(x_i) \right), \] 
 is a block-symmetric polymorphism with partition into two blocks $B_1 = \{1,3,\dots,2k+1\}$, $B_2 = \{2,4,\dots,2k\}$,
 hence width $\geq k$.

 For~\eqref{it:fs} assume $\C$ is finite of size $n$. Then for $k\in\N$
\[ A^{nk+1}\to B,\ (x_1,\dots,x_{nk+1}) \mapsto g\left( \sum_{i=1}^{nk+1} f(x_i) \right), \]
 is a symmetric polymorphism of arity $nk+1$.
\end{proof}

 Assume $\A$ and $\B$ sandwich an affine $\C$. Then $\PCSP(\A,\B)$ reduces to the linear system $\CSP(\C)$.
 More generally, Brakensiek and Guruswami showed that if $\Pol(\A,\B)$ contains (block)-symmetric polymorphisms
 of arbitrary large arity (width), then $\PCSP(\A,\B)$ can be solved in polynomial time via the so-called basic
 linear programming relaxation over the non-negative rationals and over the integers~\cite[Theorem 3.1, 4.1]{BG:SPED}.

\section{Conservative and majority sandwiches} \label{sec:majority}

 We add some straightforward observations on non-affine sandwiches.
 A function $f\colon A^k\to A$ is \emph{conservative} if $f(a_1,\dots,a_k)\in\{a_1,\dots,a_k\}$ for all
 $a_1,\dots,a_k\in A$. For example, semilattice operations are conservative.

 For a structure $\C$ and $D\subseteq C$ the \emph{induced substructure} $\C|_D$ on $D$ has domain $D$ and
 relations $R^\D := R^\C \cap (D\times\dots\times D)$ for every $R$ in the type of $\C$.
 
\begin{lemma} \label{lem:conservative}
 Let $\A,\B,\C$ be relational structures of the same type with homomorphisms $\A\xrightarrow{f}\C\xrightarrow{g}\B$,
 and let $\D := \C|_{f(A)}$. 
 Then we have homomorphisms $\A\xrightarrow{f}\D\xrightarrow{g|_{f(A)}}\B$, and every conservative polymorphism of
 $\C$ restricts to a  polymorphism of $\D$. 
\end{lemma}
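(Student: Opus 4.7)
The plan is to verify both claims directly from the definitions, since the lemma is essentially a bookkeeping statement about induced substructures and conservativity.

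First, I would check that the two restricted maps are homomorphisms. For $f\colon\A\to\D$, the map $f$ sends $A$ into $f(A)=D$, so it at least has the correct codomain. For any relation symbol $R$ of arity $n$ and any tuple $(a_1,\dots,a_n)\in R^\A$, the tuple $(f(a_1),\dots,f(a_n))$ lies in $R^\C$ because $f\colon\A\to\C$ is a homomorphism, and all of its entries lie in $D$ by construction; hence it lies in $R^\C\cap D^n = R^\D$. For $g|_{f(A)}\colon\D\to\B$, one uses $R^\D\subseteq R^\C$ together with the fact that $g\colon\C\to\B$ is a homomorphism to conclude that the componentwise image of any tuple in $R^\D$ lies in $R^\B$. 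The composition is $g\circ f$, matching the original factorization $\A\to\C\to\B$ restricted on the middle object.

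Second, I would verify the polymorphism claim. Let $p\colon C^k\to C$ be a conservative polymorphism of $\C$. Conservativity immediately gives $p(d_1,\dots,d_k)\in\{d_1,\dots,d_k\}\subseteq D$ whenever $d_1,\dots,d_k\in D$, so the restriction $p|_{D^k}$ is a well-defined map $D^k\to D$. To see it preserves the relations of $\D$, take any $n$-ary relation symbol $R$ and any tuples $t_1,\dots,t_k\in R^\D$. Since $R^\D\subseteq R^\C$ and $p\in\Pol(\C)$, the componentwise image $p(t_1,\dots,t_k)$ lies in $R^\C$; conservativity applied coordinatewise ensures each of its entries lies in $D$, so the image in fact lies in $R^\C\cap D^n = R^\D$.

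I do not foresee any real obstacle: every step is an unfolding of the definition of induced substructure and of the word ``conservative.'' The only thing to be careful about is keeping straight that conservativity is exactly what rules out the image of $p$ leaving $D$, which is why the statement requires conservative (not arbitrary) polymorphisms of $\C$.
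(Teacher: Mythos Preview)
Your proof is correct and follows the same approach as the paper: you use conservativity to show $p(D,\dots,D)\subseteq D$ and then combine $p\in\Pol(\C)$ with $R^\D = R^\C\cap D^n$ to conclude $p$ preserves $R^\D$. The paper's proof is terser (it leaves the verification that $f$ and $g|_{f(A)}$ are homomorphisms implicit), but the argument is identical.
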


\begin{proof}
 Let $p$ be a conservative polymorphism of $\C$. Then $p(D,\dots,D)\subseteq D$.
 Hence $p$ preserves $R^\D$ for every relation $R$ in the type of $\C$. 
\end{proof}
 
 As a consequence, if $\A$ and $\B$ sandwich a structure with conservative Taylor polymorphism (e.g,
 a semilattice polymorphism), then they sandwich such a structure of size $\leq|A|$.
 Hence given finite $\A,\B$ it is decidable whether they sandwich some structure with conservative
 Taylor polymorphism.   

\begin{lemma} \label{lem:majority}
 Let $\A,\B,\C$ be relational structures of the same type with homomorphisms $\A\xrightarrow{f}\C\xrightarrow{g}\B$.
 Assume that $\A$ is Boolean and $\C$ has a majority polymorphism $m$.
 Then $\D := \C|_{f(A)}$ is sandwiched by $\A$ and $\B$, has a majority polymorphism $m|_{f(A)}$, and has size $\leq 2$.
\end{lemma}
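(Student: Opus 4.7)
The plan is to reduce Lemma~\ref{lem:majority} to Lemma~\ref{lem:conservative} by exploiting the fact that every majority operation automatically becomes conservative once restricted to a set of size at most~$2$. First I would observe that $D := f(A)$ has at most two elements because $\A$ is Boolean, which immediately gives $|\D| \leq 2$ and, exactly as in the setup of Lemma~\ref{lem:conservative}, the factorization $\A \xrightarrow{f} \D \xrightarrow{g|_{f(A)}} \B$ through the induced substructure.

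The key step is to check that $m$ restricts to $\D$, i.e.\ that $m(D^3) \subseteq D$. Given any $a,b,c \in D$, since $|D| \leq 2$, the pigeonhole principle forces two of the arguments to coincide; the defining majority identities $m(x,x,y) = m(x,y,x) = m(y,x,x) = x$ then return this repeated value, which lies in $\{a,b,c\} \subseteq D$. Hence $m|_D$ is conservative on $D$. Invoking Lemma~\ref{lem:conservative} (applied to $\C$ with the single polymorphism $m$) then yields $m|_D \in \Pol(\D)$, and the majority identities transfer verbatim from $C$ to $D \subseteq C$, so $m|_D$ is still a majority operation.

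There is no real obstacle here: the whole content of the lemma is the pigeonhole observation that on a two-element set, ``majority'' collapses to ``conservative,'' after which Lemma~\ref{lem:conservative} does all the work.
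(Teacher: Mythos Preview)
Your argument is correct and is essentially the paper's own proof: both hinge on the pigeonhole observation that $m(D,D,D)\subseteq D$ when $|D|\leq 2$, after which the sandwich and the majority identities are immediate. One small technicality: Lemma~\ref{lem:conservative} as \emph{stated} requires $m$ to be conservative on all of $C$, which a majority operation on a larger $C$ need not be; what you are really using is the one-line \emph{proof} of that lemma (from $m(D^3)\subseteq D$ and $m\in\Pol(\C)$ conclude that $m|_D$ preserves each $R^{\D}=R^{\C}\cap D^n$), and you have already supplied that step directly.
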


\begin{proof}
 Clearly $f$ reduces to a homomorphism from $\A$ into $\D$, and $g$ restricts to a homomorphism from $\D$ to $\B$. 
 Note that $m(f(A),f(A),f(A)) \subseteq f(A)$ since $|f(A)| \leq 2$ and $m$ is only ternary. Hence $m|_{f(A)}$
 is a majority polymorphism on $\D$.
\end{proof}

 As a consequence, if some Boolean $\A$ and finite $\B$ sandwich a structure with majority polymorphism,
 then they sandwich such a structure of size $\leq 2$.
 In particular, given Boolean $\A,\B$ it is decidable whether they sandwich some structure with majority
 polymorphism.

\section{Summary}

 We gave some weak necessary conditions for finite structures $\A,\B$ to sandwich a finite affine $\C$
 and showed that the smallest such $\C$ can be strictly larger than $\A,\B$. The following remains open:

\begin{problem}
 Given finite $\A,\B$, is it decidable whether they sandwich some (finite) affine $\C$?  
\end{problem}


\begin{thebibliography}{10}
\bibitem{Ba:PMF}
L.~Barto.
\newblock Promises make finite (constraint satisfaction) problems infinitary.
\newblock In {\em 34th Annual ACM/IEEE Symposium on Logic in Computer Science
  (LICS)}. 2019.


\bibitem{BG:PCS}
J.~Brakensiek and V.~Guruswami.
\newblock Promise constraint satisfaction structure theory and a symmetric
  {B}oolean dichotomy.
\newblock In {\em Proceedings of the {T}wenty-{N}inth {A}nnual {ACM}-{SIAM}
  {S}ymposium on {D}iscrete {A}lgorithms}, pages 1782--1801. SIAM,
 Philadelphia, PA, 2018.

\bibitem{BG:SPED}
J.~Brakensiek and V.~Guruswami.
\newblock Symmetric polymorphisms and efficient decidability of Promise CSPs.
\newblock In {\em Proceedings of the {T}hirty-{F}irst {A}nnual {ACM}-{SIAM} {S}ymposium on {D}iscrete {A}lgorithms}
pages 297--304. SIAM, 2020.

\bibitem{Sc:CSP}
T.~J. Schaefer.
\newblock The complexity of satisfiability problems.
\newblock In {\em Conference {R}ecord of the {T}enth {A}nnual {ACM} {S}ymposium
  on {T}heory of {C}omputing ({S}an {D}iego, {C}alif., 1978)}, pages 216--226.
  ACM, New York, 1978.


\end{thebibliography}

\end{document}